\newcommand{\bt}{\begin{thr}{\bf Theorem. }} 
\newcommand{\satz}{\begin{thr}{\bf Theorem. }\rm} 
\newcommand\E{\mathbb{E}}
\newcommand\p{\mathbb{P}}
\newcommand\F{\mathcal{F}}
\renewcommand\i{\infty}
\newcommand\al{\alpha}
\newcommand\la{\lambda}
\newcommand\Vliq{V^{\text{\it{liq}}}}
\newtheorem{theorem}{Theorem}[section]  
\newtheorem{definition}[theorem]{Definition}
\newtheorem{lemma}[theorem]{Lemma}
\newtheorem{proposition}[theorem]{Proposition}
\theoremstyle{definition}
\newtheorem*{acka}{Acknowledgement}
\newtheorem{remark}[theorem]{Remark}
\begin{document}

\title{Admissible Trading Strategies under Transaction Costs}

\author{Walter Schachermayer\footnote{Fakult\"at f\"ur Mathematik, Universit\"at Wien, Nordbergstrasse 15, A-1090 Wien, {\tt walter.schachermayer@univie.ac.at}. Partially supported by the Austrian Science Fund (FWF) under grant P25815, the European Research Council (ERC) under grant FA506041 and by the Vienna Science and Technology Fund (WWTF) under grant MA09-003.}}

\date{\today}
\maketitle

\begin{abstract}
A well known result in stochastic analysis reads as follows: for an $\mathbb{R}$-valued super-martingale $X = (X_t)_{0\leq t \leq T}$ such that the terminal value $X_T$ is non-negative, we have that the entire process $X$ is non-negative. An analogous result holds true in the  no arbitrage theory of mathematical finance: under the assumption of no arbitrage, an admissible portfolio process $x+(H\cdot S)$ verifying $x+(H\cdot S)_T\geq 0$ also satisfies $x+(H\cdot S)_t\geq 0,$ for all $0 \leq t \leq T$.

In the present paper we derive an analogous result in the presence of transaction costs. In fact, we give two versions: one with a numéraire-based, and one with a numéraire-free notion of admissibility. It turns out that this distinction on the primal side perfectly corresponds to the difference between local martingales and true martingales on the dual side. 

A counter-example reveals that the consideration of transaction costs makes things more delicate than in the frictionless setting.
\end{abstract}

\section{A Theorem on Admissibility}

We consider a stock price process $S=(S_t)_{0\le t\le T}$ in continuous time with a fixed horizon $T.$ This stochastic process is assumed to be based on a filtered probability space
$(\Omega, \F, (\F_t)_{0\le t\le T}, \p),$ satisfying the usual conditions of completeness and right continuity. We assume that $S$ is adapted and has \emph{c\`adl\`ag} (right continuous, left limits), 
and strictly positive trajectories, i.e.~the function $t\to S_t (\omega)$ is c\`adl\`ag and strictly positive, for almost each $\omega \in\Omega.$ 

In mathematical finance a key assumption is that the process $S$ is {\it free of arbitrage}. The Fundamental Theorem of Asset Pricing states that this property is {\it essentially} equivalent to the property that $S$ admits an equivalent local martingale measure (see, \cite{HK79}, \cite{Delbaen1994}, or the books \cite{DelbaenS},\cite{KSh98}).

\begin{definition}\label{def1.1}
The process $S$ admits an equivalent local martingale measure, if there is a probability measure $Q \sim \mathbb{P}$ such that $S$ is a local martingale under $Q$.
\end{definition}

Fix a process $S$ satisfying the above assumption and note that Def.1.1 implies in particular that $S$ is a semi-martingale as this property is invariant under equivalent changes of measure. Turning to the theme of the paper, we now consider {\it trading strategies}, $\mbox{i.e.}$ $S$-integrable predictable processes $H=(H_t)_{0 \leq t \leq T}$. We call H {\it admissible} if there is $M>0$ such that
\begin{equation}\label{1}
(H \cdot S)_t \geq -M, \qquad \mathbb{P}-a.s.  \quad \mbox{for} \qquad 0 \leq t \leq T.
\end{equation}
The stochastic integral
\begin{equation}
(H \cdot S)_t = \int^t_0 H_u dS_u, \qquad\qquad 0 \leq t \leq T,
\end{equation}
then is a local $Q$-martingale by a result of Ansel-Stricker under each equivalent local martingale measure $Q$ (see \cite{AS94} and \cite{St02}). Assumption \eqref{1} also implies that the local martingale $H \cdot S$ is a {\it super-martingale} (see \cite{DelbaenS}, Prop.7.2.7) under each equivalent local martingale measure $Q$. We thus infer from the easy result mentioned in the first line of the abstract that $(H \cdot S)_T \geq -x$ almost surely implies that $(H \cdot S)_t \geq -x$ almost surely under $Q$ (and therefore also under $\mathbb{P}$), for all $0 \leq t \leq T$. In fact, we may replace the deterministic time $t$ by a $[0,T]$-valued stopping time $\tau.$
\vskip10pt
We resume our findings in the subsequent well-known Proposition (compare \cite{63}, Prop.4.1).

\begin{proposition}\label{pro4.12}
Let the process $S$ admit an equivalent local martingale measure, let $H$ be admissible, and suppose that there is $x\in\mathbb{R}_+$ such that

\begin{equation}
x+(H \cdot S)_T \geq0, \qquad\qquad \mathbb{P}-a.s.
\end{equation}
Then
\begin{equation}
x+(H \cdot S)_{\tau} \geq0, \qquad\qquad \mathbb{P}-a.s.
\end{equation}
for every $[0,T]$-valued stopping time $\tau$.
\end{proposition}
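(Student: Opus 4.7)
The plan is to push the problem to the dual side via some equivalent local martingale measure $Q$, which exists by hypothesis, and then to read off the conclusion from the supermartingale property of $H \cdot S$ under $Q$. All of the required ingredients are already assembled in the paragraph preceding the proposition: the Ansel--Stricker theorem gives that $H \cdot S$ is a local $Q$-martingale, and the admissibility bound $(H\cdot S)_t\geq -M$ together with \cite{DelbaenS}, Prop.~7.2.7, upgrades this local martingale to a genuine $Q$-supermartingale. So the proposition reduces to a known fact: a supermartingale that is bounded below and whose terminal value dominates $-x$ already dominates $-x$ at every stopping time.

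Concretely, I would fix an equivalent local martingale measure $Q$ and observe that, since $Q\sim\p$, it suffices to establish the desired inequality $Q$-a.s. Writing $X:=H\cdot S$, the $Q$-supermartingale property applied to the bounded stopping times $\tau$ and $T$ yields
\[
\E_Q\!\left[X_T\,\big|\,\F_\tau\right] \;\leq\; X_\tau, \qquad Q\text{-a.s.}
\]
Combining this with the hypothesis $X_T\geq -x$ and the monotonicity of conditional expectation gives
\[
X_\tau \;\geq\; \E_Q\!\left[X_T\,\big|\,\F_\tau\right] \;\geq\; -x,
\]
which is the required statement, first $Q$-a.s.\ and hence also $\p$-a.s.

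The only technical point that deserves attention is the validity of the optional sampling step at a random time $\tau$, as opposed to a deterministic $t\in[0,T]$; this is where most of the (very modest) work sits. Since $X$ is bounded below by $-M$, the shifted process $X+M$ is a nonnegative c\`adl\`ag $Q$-supermartingale, and for such processes optional sampling at bounded stopping times holds unconditionally (either directly, or by applying Fatou's lemma along a sequence localizing $H\cdot S$ as a local martingale). No further obstacles are anticipated, and the argument works uniformly in the choice of $[0,T]$-valued stopping time $\tau$.
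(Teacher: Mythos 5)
Your argument is correct and coincides with the paper's own reasoning: fix an equivalent local martingale measure $Q$, invoke Ansel--Stricker plus the admissibility bound to get the $Q$-supermartingale property of $H\cdot S$, and conclude via optional sampling at $\tau$ together with $(H\cdot S)_T\geq -x$ and $Q\sim\p$. The remark on justifying optional sampling at stopping times for the bounded-below supermartingale is a welcome touch of care on a point the paper passes over silently.
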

\vskip6pt

We now introduce transaction costs: fix $0 \leq \lambda < 1$. We define the {\it bid-ask spread} as the interval $[(1-\lambda)S, S]$. The interpretation is that an agent can buy the stock at price $S$, but sell it only at price $(1-\lambda) S$. Of course, the case $\lambda=0$ corresponds to the usual frictionless theory. 
\newline
\vskip6pt
In the setting of transaction costs the notion of {\it consistent price systems}, which goes back to \cite{JouiKall} and \cite{CvitKara}, plays a role analogous to the notion of equivalent martingale measures in the frictionless theory (Definition \ref{def1.1}).

\begin{definition}\label{def4.1}
Fix $1>\la \geq0.$ A process $S=(S_t)_{0\le t\le T}$ satisfies the condition $(CPS^\la)$ of \textnormal{having a consistent price system under transaction costs $\la$} 
if there is a process $\widetilde{S}=(\widetilde{S}_t)_{0\le t\le T},$ such that
\begin{equation}\label{p3}
(1-\la)S_t \le \widetilde{S}_t \le S_t, \qquad\qquad 0\le t\le T,
\end{equation}
as well as a probability measure $Q$ on $\F$, equivalent to $\p,$ such that $(\widetilde{S}_t)_{0\le t\le T}$ is a local martingale under $Q$.

We say that $S$ \textnormal{admits consistent price systems for arbitrarily small transaction costs} if $(CPS^\la)$ is satisfied, for all $1 > \la >0.$
\end{definition}

For continuous process $S$, in \cite{GRS08} the condition of {\it admitting consistent price systems} for arbitrarily small transaction costs has been related to the condition of {\it no arbitrage} 
under arbitrarily small transaction costs, thus proving a version of the Fundamental Theorem of Asset Pricing under small transaction costs (compare \cite{KS02} for a large amount of related material).
  
It is important to note that we {\it do not assume} that $S$ is a semi-martingale as one is forced to do in the frictionless theory \cite[Theorem 7.2]{Delbaen1994}.
Only the process $\widetilde{S}$ appearing in Definition \ref{def4.1} has to be a semi-martingale, as it becomes a local martingale after passing to an equivalent measure $Q.$

\vskip10pt
\indent
To formulate a result analogous to Proposition \ref{pro4.12} in the setting of transaction costs we have to define the notion of $\mathbb{R}^2$-valued {\it self-financing trading strategies}.

\begin{definition}\label{def4.2}
Fix a strictly positive stock price process $S=(S_t)_{0\le t\le T}$ with c\`adl\`ag paths, as well as transaction costs $1>\la >0.$

A \textnormal{self-financing trading strategy starting with zero endowment} is a pair of predictable, finite variation processes $(\varphi^0_t,\varphi^1_t)_{0\le t\le T}$ 
such that\\
\vskip6pt
\indent
$(i)$ \ $\varphi^0_0 =\varphi^1_0=0$, \\
\vskip3pt
$(ii)$ denoting by $\varphi^0_t=\varphi_t^{0,\uparrow} -\varphi_t^{0,\downarrow} $ and $\varphi^1_t=\varphi_t^{1,\uparrow} -\varphi_t^{1,\downarrow},$ the canonical decompositions of $\varphi^0$ and $\varphi^1$ into 
the difference of two increasing processes, starting at $\varphi_0^{0,\uparrow} =\varphi_0^{0,\downarrow} = \varphi_0^{1,\uparrow} =\varphi_0^{1,\downarrow} =0,$ 
these processes satisfy
\begin{equation}\label{148}
d\varphi_t^{0,\uparrow} \le(1-\la)S_t d\varphi_t^{1,\downarrow}, \quad d\varphi_t^{0,\downarrow} \geq S_t d\varphi_t^{1,\uparrow}, \quad 0\le t\le T.
\end{equation}
\vskip6pt
The trading strategy $\varphi =(\varphi^0,\varphi^1)$ is called \textnormal{admissible} if there is $M>0$ such that the liquidation value $V_t^{liq}$ satisfies
\begin{equation}\label{149}
V_\tau^{liq}(\varphi^0,\varphi^1):=\varphi^0_\tau+(\varphi_\tau^1)^+(1-\la)S_\tau-(\varphi_\tau^1)^-S_\tau\geq -M,
\end{equation}
a.s., for all $[0,T]$-valued stopping times $\tau$.
\end{definition}

The processes $\varphi^0_t$ and $\varphi^1_t$ model the holdings at time $t$ in units of bond and stock respectively. We normalize the bond price by $B_t \equiv 1$. The differential notation in \eqref{148} needs some explanation. If $\varphi$ is continuous, then \eqref{148} has to be understood as the integral requirement.

\begin{equation}\label{p4}
\int^{\tau}_{\sigma} ((1-\lambda) S_t d\varphi^{1,\downarrow}_t - d\varphi^{0,\uparrow}_t) \geq 0, \qquad \qquad a.s.
\end{equation} 
for all stopping times $0 \leq \sigma \leq \tau \leq T$, and analogously for the second differential inequality in \eqref{148}.  The above integral makes pathwise sense as Riemann-Stieltjes intregral, as $\varphi$ is continuous and of finite variation and $S$ is c\`adl\`ag. Things become more delicate when we also consider jumps of $\varphi$: note that, for every stopping time $\tau$ the left and right limits $\varphi_{\tau_-}$ and $\varphi_{\tau_+}$ exist as $\varphi$ is of bounded variation. But the three values $\varphi_{\tau_-}, \varphi_{\tau}$ and $\varphi_{\tau_+}$ may very well be different. As in \cite{CampScha} we denote the increments by

\begin{equation}\label{181}
\Delta\varphi_{\tau} = \varphi_{\tau} - \varphi_{\tau_-}, \qquad \qquad \Delta_+\varphi_{\tau}=\varphi_{\tau_+}-\varphi_{\tau}.
\end{equation} 
For totally inaccessible stopping times $\tau$, the predictability of $\varphi$ implies that $\Delta\varphi_{\tau} = 0$ almost surely, while for accessible stopping times $\tau$ it may happen that $\Delta\varphi_{\tau} \neq 0$ as well as $\Delta_+\varphi_{\tau} \neq 0.$

To the assumption that \eqref{p4} has to hold true for the continuous part of $\varphi$ the following requirements therefore have to be added to take care of the jumps of $\varphi.$
\begin{equation}\label{177}
\Delta\varphi_{\tau}^{0,\uparrow} \leq (1-\lambda) S_{\tau_-} \Delta\varphi^{1,\downarrow}_{\tau}, \qquad \qquad \Delta\varphi_{\tau}^{0,\downarrow} \geq S_{\tau_-} \Delta\varphi^{1,\uparrow}_{\tau}
\end{equation} 
and in the case of right jumps
\begin{equation}\label{178}
\Delta_+\varphi_{\tau}^{0,\uparrow} \leq (1-\lambda) S_{\tau} \Delta_+\varphi^{1,\downarrow}_{\tau}, \qquad \qquad \Delta_+\varphi_{\tau}^{0,\downarrow} \geq S_{\tau} \Delta_+\varphi^{1,\uparrow}_{\tau},
\end{equation} 
holding true a.s. for all $[0,T]$-valued stopping times $\tau.$
Let us give an economic interpretation of the significance of \eqref{177} and \eqref{178}. For simplicity we let $\lambda =0.$ Think of a predictable time $\tau$, say the time $\tau$ of a speech of the chairman of the Fed. The speech does not come as a surprise. It was announced some time before which - mathematically speaking - corresponds to the predictability of $\tau$. 
It is to be expected that this speech will have a sudden effect on the price of a stock $S$, say a possible jump from $S_{\tau_-}(\omega) = 100$ to $S_{\tau}(\omega)= 110$ (recall that $S$ is assumed to be c\`adl\`ag). A trader may want to follow the following strategy: she holds a position of $\varphi^1_{\tau_-}(\omega)$ stocks until ``immediately before the speech". 
Then, one second before the speech starts, she changes the position from $\varphi^1_{\tau_-} (\omega)$ to $\varphi^1_{\tau} (\omega)$ causing an increment of $\Delta\varphi^1_{\tau}(\omega)$. 
Of course, the price $S_{\tau_-}(\omega)$ still applies, corresponding to \eqref{177}. Subsequently, the speech starts and the jump $\Delta S_{\tau} (\omega) = S_{\tau}(\omega) - S_{\tau_-} (\omega)$ is revealed. The agent may now decide ``immediately after learning the size of $\Delta S_{\tau} (\omega)$" to change her position from $\varphi^1_{\tau} (\omega)$ to $\varphi^1_{\tau_+} (\omega)$ on the base of the price $S_{\tau}(\omega)$ which corresponds to \eqref{178}.

\vskip10pt

We have chosen to define the trading strategy $\varphi$ by explicitly specifying both accounts, the holdings in bond $\varphi^0$ as well as the holdings in stock $\varphi^1.$ 
It would be sufficient to only specify $\varphi^1$ similarly as in the frictionless theory where we usually only specify the process $H$ in \eqref{1} which corresponds to $\varphi^1$ in the present notation. Given a predictable finite variation process $\varphi^1 =(\varphi^1_t)_{0\le t\le T}$ starting at
$\varphi^1_0 =0,$ which we canonically decompose into the difference $\varphi^1 =\varphi^{1,\uparrow} -\varphi^{1,\downarrow},$ we may \textnormal{define} the process $\varphi^0$ by
\begin{align*}
d\varphi^0_t =(1-\la) S_t d\varphi_t^{1,\downarrow} -S_t d\varphi^{1,\uparrow}_t.
\end{align*}
The resulting pair $(\varphi^0,\varphi^1)$ obviously satisfies \eqref{148} with equality holding true rather than inequality. Not withstanding, it is convenient in \eqref{148} to consider trading
strategies $(\varphi^0,\varphi^1)$ which allow for an inequality in \eqref{148}, $\mbox{i.e.}$ for ``throwing away money''. But it is clear from the preceding argument that we may always pass to a 
dominating pair $(\varphi^0,\varphi^1)$ where equality holds true in \eqref{148}.
\vskip10pt

In the theory of financial markets under transaction costs the super-martingale property of the value process is formulated in Proposition 1.6 below. First we have to recall a definition from \cite{DM82} which extends the notion of a super-martingale beyond the framework of c\`adl\`ag processes.

\begin{definition}
An optional process $X=(X_t) _{0 \leq t \leq T}$ is called an \textnormal{optional strong super-martingale} if, for all stopping times $0\leq \sigma \leq \tau \leq T$ we have 
\begin{equation}\label{179}
\mathbb{E} [X_\tau \mid \mathcal{F}_{\sigma}] \leq X_\sigma,
\end{equation}
where we impose that $X_{\tau}$ is integrable.
\end{definition}

An optional strong super-martingale can be decomposed in the style of Doob-Meyer which is known under the name of Mertens decomposition (see \cite{DM82}). $X$ is an optional strong super-martingale if and only if it can be decomposed into 
\begin{equation}\label{184}
X=M-A,
\end{equation}
where $M$ is a local martingale (and therefore c\`adl\`ag) as well as a super-martingale, and $A$ an increasing predictable process (which is l\`adl\`ag but has no reason to be c\`agl\`ad or c\`adl\`ag). This decomposition then is unique. 

One may also define the notion of a local optional strong supermartingale in an obvious way. In this case the process $M$ in \eqref{183} only is required to be a local martingale and not necessarily a super-martingale, while the requirements on $A$ remain unchanged.

\begin{proposition}\label{150}
Fix $S$, transaction costs $1>\la>0,$ and an admissible self-financing trading strategy $\varphi=(\varphi^0, \varphi^1)$ as above. Suppose that $(\widetilde{S},Q)$ is a consistent price system under transaction costs $\la$. Then the process
$$\widetilde{V}_t:= \varphi^0_t + \varphi^1_t \widetilde{S}_t, \qquad \qquad \qquad 0\leq t \leq T,$$
satisfies $\widetilde{V} \geq V^{\mbox{liq}}$ almost surely and is an optional strong super-martingale under $Q$.
\end{proposition}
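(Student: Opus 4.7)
The plan is to exhibit a decomposition $\widetilde V = L + C$ on $[0,T]$, with $L$ a c\`adl\`ag $Q$-local martingale and $C$ an adapted, l\`adl\`ag process starting at $0$ whose increments of all three types (continuous, left-jump, right-jump) are nonpositive. Granted such a decomposition, admissibility gives $L = \widetilde V - C \ge V^{liq} \ge -M$, so $L$ is a local martingale bounded below and hence a true $Q$-supermartingale by Fatou. Because $C_\tau \le C_\sigma$ pathwise for stopping times $\sigma \le \tau$, taking conditional expectations yields
\begin{equation*}
\E_Q[\widetilde V_\tau\mid\F_\sigma] \;=\; \E_Q[L_\tau\mid\F_\sigma] + \E_Q[C_\tau\mid\F_\sigma] \;\le\; L_\sigma + C_\sigma \;=\; \widetilde V_\sigma,
\end{equation*}
which is the optional strong supermartingale inequality.

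The pointwise inequality $\widetilde V \ge V^{liq}$ is immediate from the sandwich $(1-\la)S_t \le \widetilde S_t \le S_t$: splitting according to the sign of $\varphi^1_t$ gives $\varphi^1_t \widetilde S_t \ge (\varphi^1_t)^+(1-\la)S_t - (\varphi^1_t)^- S_t$, and adding $\varphi^0_t$ yields the claim.

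For the decomposition, set $L_t := \int_0^t \varphi^1_s\,d\widetilde S_s$. The integrand $\varphi^1$ is predictable and locally bounded (finite variation with $\varphi^1_0=0$; localize by $\tau_n := \inf\{t:|\varphi^1_t| + \mathrm{Var}_t(\varphi^1) > n\}$), so $L$ is a c\`adl\`ag $Q$-local martingale whose jump at a stopping time $\tau$ equals $\varphi^1_\tau\Delta\widetilde S_\tau$. Define $C_t := \widetilde V_t - L_t$ and verify the three increment types. On the continuous part, the product rule gives $d\widetilde V - dL = d\varphi^{0,c} + \widetilde S\,d\varphi^{1,c}$ (the bracket $[\varphi^1,\widetilde S]^c$ vanishes because $\varphi^1$ has finite variation), and \eqref{148} combined with $(1-\la)S\le\widetilde S\le S$ gives $d\varphi^{0,c}+\widetilde S\,d\varphi^{1,c}\le(\widetilde S-S)d\varphi^{1,\uparrow,c}+((1-\la)S-\widetilde S)d\varphi^{1,\downarrow,c}\le 0$. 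At a left jump at $\tau$, expanding $\Delta(\varphi^1\widetilde S)_\tau = \varphi^1_{\tau-}\Delta\widetilde S_\tau + \widetilde S_{\tau-}\Delta\varphi^1_\tau + \Delta\varphi^1_\tau\Delta\widetilde S_\tau$ and subtracting $\Delta L_\tau = \varphi^1_\tau\Delta\widetilde S_\tau$ leaves $\Delta C_\tau = \Delta\varphi^0_\tau + \widetilde S_{\tau-}\Delta\varphi^1_\tau$, which \eqref{177} and the sandwich at $\tau-$ bound above by $0$. At a right jump at $\tau$, $\Delta_+ L_\tau = 0$ because $\widetilde S$ is c\`adl\`ag, so $\Delta_+ C_\tau = \Delta_+\varphi^0_\tau + \widetilde S_\tau\,\Delta_+\varphi^1_\tau$, which is $\le 0$ by \eqref{178} and the sandwich at $\tau$.

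The main subtlety is the integration-by-parts bookkeeping in the l\`adl\`ag framework. One must use $\varphi^1$ (not $\varphi^1_-$) as the stochastic integrator against $\widetilde S$, so that the cross-term $\Delta\varphi^1\,\Delta\widetilde S$ is absorbed into the local martingale $L$. This is what leaves $\widetilde S_{\tau-}\Delta\varphi^1_\tau$ in $C$ at left jumps and $\widetilde S_\tau\,\Delta_+\varphi^1_\tau$ at right jumps, matching the prices $S_{\tau-}$ in \eqref{177} and $S_\tau$ in \eqref{178}; only with this alignment does the CPS sandwich close each of the three inequalities pathwise.
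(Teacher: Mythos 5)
Your proposal is correct and, at bottom, it is the paper's own argument in different packaging: the splitting $\widetilde V=L+C$ with $L=\varphi^1\cdot\widetilde S$ and $C$ the Stieltjes part is exactly the decomposition in \eqref{183}/\eqref{184}, your left- and right-jump computations reproduce \eqref{186a} and \eqref{185a} via \eqref{177}, \eqref{178} and the (left-limit version of the) sandwich, and the passage from ``local martingale bounded below'' to supermartingale is the same Ansel--Stricker/Fatou step. The one point you assert but do not prove is precisely what the paper's proof spends most of its effort on: that the product rule is valid in this l\`adl\`ag setting, i.e.\ that $C=\widetilde V-\varphi^1\cdot\widetilde S$ is indeed a finite-variation l\`adl\`ag process whose continuous part is $\int(d\varphi^{0,c}+\widetilde S\,d\varphi^{1,c})$ and whose only other increments are the left and right jumps you computed --- without this, checking the three increment types does not by itself yield pathwise monotonicity of $C$. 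The paper establishes it by treating a continuous $\varphi$, a single right jump and a single left jump separately, summing the contributions under a uniform bound on the total variation, and then localizing using the predictability of $\varphi$ (compare also Lemma~8 in \cite{CampScha}); your write-up would be complete after either citing such a l\`adl\`ag integration-by-parts result or carrying out this approximation, which your localizing sequence $\tau_n$ and the pathwise summability of the jump corrections make routine but not automatic.
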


\begin{proof}
The assertion $\widetilde{V} \geq V^{\mbox{liq}}$ is an obvious consequence of $\widetilde{S} \in [(1-\lambda) S,S]$.

We have to show that $\widetilde{V}$ decomposes as in \eqref{184}. Arguing formally, we may apply the product rule to obtain
\begin{equation}\label{182}
d\widetilde{V}_t = (d\varphi^0_t + \widetilde{S}_t d\varphi^1_t) + \varphi^1_t d\widetilde{S}_t
\end{equation}
so that
\begin{equation}\label{183}
\widetilde{V}_t=\int^t_0(d\varphi^0_u + \widetilde{S}_u d\varphi^1_u) + \int^t_0\varphi^1_u d\widetilde{S}_u.
\end{equation}
The first term in \eqref{183} is decreasing by \eqref{148} and the fact that $\widetilde{S}\in[(1-\la)S,S]$. The second term defines, at least formally speaking, a local Q-martingale as $\widetilde{S}$ is so. Hence the sum of the two integrals should be an (optional strong) super-martingale. 
\vskip10pt
The justification of the above formal reasoning deserves some care (compare the proof of Lemma 8, in \cite{CampScha}). Suppose first that $\varphi$ is continuous. In this case $\varphi$ is a semi-martingale so that we are allowed to apply Itô calculus to $\widetilde{V}$. Formula \eqref{183} therefore makes perfect sense as an Itô integral, bearing in mind that $\varphi$ has finite variation, which coincides with the pointwise interpretation of the integral via partial integration. 
The first integral in \eqref{183} is a well-defined decreasing predictable process. As regards the second integral, note that by the admissibility of $\varphi$ it is uniformly bounded from below. Hence by a result of Ansel-Stricker (\cite{AS94}, see also \cite{St02}) it is a local Q-martingale as well as a super-martingale.  Hence $\widetilde{V}$ is indeed a super-martingale under $Q$ (in the classical c\`adl\`ag sense).
\vskip10pt
Passing to the case when $\varphi$ is allowed to have jumps, the process $\widetilde{V}$ need not be c\`adl\`ag anymore. It still is an optional process and we have to verify that it decomposes as in \eqref{184}. Assume first that $\varphi$ is of the form 
\begin{equation}\label{185}
\varphi_t = (f^0, f^1) \mathbbm{1}_{\rrbracket\tau,T \rrbracket} (t),
\end{equation}
where $(f^0, f^1)=\Delta_+(\varphi^0_{\tau}, \varphi^1_{\tau})$ are $\mathcal{F}_{\tau}$-measurable bounded random variables verifying \eqref{178} and $\tau$ is a $[0,T]$-stopping time. 
We obtain 

\begin{align}\label{185a}
\widetilde{V}_t &= [\Delta_+\varphi^0_{\tau} +(\Delta_+\varphi^1_{\tau})\widetilde{S}_t] \mathbbm{1}_{\rrbracket \tau,T \rrbracket}{(t)}\nonumber\\
&=[\Delta_+\varphi^0_{\tau} +(\Delta_+\varphi^1_{\tau})\widetilde{S}_{\tau}] \mathbbm{1}_{\rrbracket \tau,T \rrbracket}{(t)} + (\Delta_+\varphi^1_{\tau}) (\widetilde{S}_t - \widetilde{S}_{\tau}) \mathbbm{1}_{\rrbracket \tau,T \rrbracket}{(t)}.
\end{align}
Again, the first term is a decreasing predictable process and the second term is a local martingale under $Q$.
\vskip10pt 
Next assume that $\varphi$ is of the form
\begin{equation}\label{186}
\varphi_t = (f^0, f^1) \mathbbm{1}_{\llbracket \tau,T \rrbracket}(t),
\end{equation}
where $\tau$ is a predictable stopping time, and $(f^0, f^1)=\Delta (\varphi^0_{\tau}, \varphi^1_{\tau})$ are bounded $\mathcal{F}_{\tau_-}$-measurable random variables verifying \eqref{177}.
Similarly as in \eqref{185a} we obtain 

\begin{align}\label{186a}
\widetilde{V}_t &= [\Delta\varphi^0_{\tau} +(\Delta\varphi^1_{\tau})\widetilde{S}_t] \mathbbm{1}_{\llbracket \tau,T \rrbracket}{(t)}\nonumber\\
&=[\Delta\varphi^0_{\tau} +(\Delta\varphi^1_{\tau})\widetilde{S}_{\tau_-}] \mathbbm{1}_{\llbracket \tau,T \rrbracket}{(t)} + (\Delta\varphi^1_{\tau}) (\widetilde{S}_t - \widetilde{S}_{\tau_-}) \mathbbm{1}_{\llbracket \tau,T \rrbracket}{(t)}.
\end{align}

Once more, the first term is a decreasing predictable process (this time it is even c\`adl\`ag) and the second term is a local martingale under $Q$.
\vskip10pt
Finally we have to deal with a general admissible self-financing trading strategy $\varphi$. To show that $\widetilde{V}$ is of the form \eqref{184} we first assume that the total variation of $\varphi$ is uniformly bounded. We decompose $\varphi$ into its continuous and purely discontinuous part $\varphi = \varphi^c + \varphi^{pd}$. We also may find a sequence $(\tau_n)^{\infty}_{n=1}$ of $[0,T]\cup\{\infty\}$-valued stopping times such that the supports $(\llbracket\tau_n \rrbracket)^\infty_{n=1}$ are mutually disjoint and $\bigcup\limits^\i_{n=1} \llbracket\tau_n \rrbracket$ exhausts the right jumps of $\varphi$. Similarly, we may find a sequence $(\tau_n^p)^{\infty}_{n=1}$ of predictable stopping times such that their supports $(\llbracket\tau_n^p\rrbracket)^\infty_{n=1}$ are mutually disjoint and $\bigcup\limits^\i_{n=1} \llbracket\tau_n^p\rrbracket$ exhausts the left jumps of $\varphi$.
We apply the above argument to $\varphi^c$, and to each $(\tau_n, \Delta_+\varphi_{\tau_n})$ and $(\tau_n^p, \Delta \varphi_{\tau_n^p})$, and sum up the corresponding terms in \eqref{183}, \eqref{185a} and \eqref{186a}. This sum converges to $\widetilde{V} = M-A$, where $M$ is a local Q-martingale and $A$ an increasing process, as we have assumed that the total variation of $\varphi$ is bounded (compare \cite{KabaStri} and the proof of Lemma 8 in \cite{CampScha}). By the boundedness from below we conclude that $M$ is also a super-martingale.
\vskip10pt
Passing to the case where $\varphi$ has only finite instead of uniformly bounded variation, we use the predictability of $\varphi$ to find a localizing sequence $(\sigma_k)^\infty_{k=1}$ such that each stopped process $\varphi^{\sigma_k}$ has uniformly bounded variation. Apply the above argument to each $\varphi^{\sigma_k}$ to obtain the same conclusion for $\varphi$.

Summing up, we have shown that $\widetilde{V}$ admits a Mertens decomposition \eqref{184} and therefore is an optional strong super-martingale.
\end{proof}

We can now state the analogous result to Proposition \ref{pro4.12} in the presence of transaction costs.

\begin{theorem}\label{pro4.11}
Fix the c\`adl\`ag, adapted process $S$ and $1>\la >0$ as above, and suppose that $S$ satisfies $(CPS^{\la'})$, for \textnormal{each} $1 > \la'>0.$

Let $\varphi =(\varphi^0_t,\varphi^1_t)_{0\le t\le T}$ be an admissible, self-financing trading strategy under transaction costs $\la,$ starting with zero endowment, and suppose that there is 
$x>0$ s.t. for the terminal liquidation value $V_T^{liq}$ we have a.s.
\begin{equation}\label{169}
V_T^{liq}(\varphi^0,\varphi^1) =\varphi^0_T+({\varphi_T^1})^+(1-\la)S_T-({\varphi^1_T})^- S_T \geq -x.
\end{equation}
We then also have that
\begin{equation}\label{170}
V_{\tau}^{liq}(\varphi^0,\varphi^1)=\varphi^0_\tau + ({\varphi^1_\tau})^+(1-\la)S_\tau-({\varphi^1_\tau})^- S_\tau \geq -x,
\end{equation}
a.s., for every stopping time $0\le \tau\le T.$
\end{theorem}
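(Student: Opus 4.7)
My starting point is Proposition~\ref{150}. Given any consistent price system $(\widetilde S,Q)$ for transaction costs $\la$, it identifies $\widetilde V_t := \varphi^0_t + \varphi^1_t \widetilde S_t$ as an optional strong super-martingale under $Q$ dominating $V^{liq}$. Since $\widetilde V_T \ge V^{liq}_T \ge -x$, the super-martingale inequality at $\tau$ yields
\[
\widetilde V_\tau \;=\; \varphi^0_\tau + \varphi^1_\tau \widetilde S_\tau \;\ge\; \E_Q[\widetilde V_T\mid\F_\tau] \;\ge\; -x \qquad Q\text{-a.s.,}
\]
and hence also $\p$-a.s., for every such $(\widetilde S,Q)$.

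The real task is to promote this one-parameter family of inequalities---one per CPS---into the desired pointwise bound $V^{liq}_\tau\ge -x$. Since $V^{liq}_\tau = \varphi^0_\tau + \min\{\varphi^1_\tau y : y\in[(1-\la)S_\tau,S_\tau]\}$, I would split the bad event $\{V^{liq}_\tau<-x\}$ according to the sign of $\varphi^1_\tau$ (the case $\varphi^1_\tau=0$ is immediate from $\widetilde V_\tau=\varphi^0_\tau\ge -x$) and, on the sub-event under consideration, aim to exhibit CPS whose values $\widetilde S_\tau$ approach the worst-case endpoint of the bid--ask interval: $(1-\la)S_\tau$ on $\{\varphi^1_\tau>0\}$ and $S_\tau$ on $\{\varphi^1_\tau<0\}$. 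Passing to the limit in the previous paragraph's inequality would then give $V^{liq}_\tau\ge -x$ on that sub-event, contradicting the assumption that it had positive measure.

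The full strength of the hypothesis---that $(CPS^{\la'})$ holds for \emph{every} $\la'\in(0,1)$---is what makes such an approximation plausible: taking $\la_n\uparrow\la$ and a CPS $(\widetilde S^n,Q^n)$ for $\la_n$, the lower edge of the admissible range $[(1-\la_n)S_\tau,S_\tau]$ for $\widetilde S^n_\tau$ decreases to $(1-\la)S_\tau$, and each such $\widetilde S^n$ is a fortiori a CPS for $\la$, so the first paragraph applies to it. I expect the main obstacle to be the pathwise construction, for each $n$, of a CPS whose value at $\tau$ is pinned close to this lower edge on a prescribed $\F_\tau$-measurable set---a local pasting/Girsanov-type perturbation whose rigorous implementation is the technical heart of the proof, reminiscent of the care exerted in Lemma~8 of \cite{CampScha}. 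A cleaner alternative, should this direct construction resist, would be a contradiction argument combining a modified strategy $\widetilde\varphi$ that liquidates at $\tau$ on $\{V^{liq}_\tau<-x\}$ with a bipolar/Hahn--Banach duality for the solvency cone, using $(CPS^{\la'})$ at scales $\la'>\la$ to rule out the putative gap.
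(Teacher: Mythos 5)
Your first paragraph (Proposition~\ref{150} plus the optional strong super-martingale inequality gives $\varphi^0_\tau+\varphi^1_\tau\widetilde S_\tau\ge -x$ for \emph{every} $\la$-consistent price system) and your reduction to approximating the worst-case endpoint of the bid--ask interval on $\{\varphi^1_\tau>0\}$ resp.\ $\{\varphi^1_\tau<0\}$ are exactly the right strategy, and coincide with the paper's. But the decisive step --- actually producing a $\la$-CPS whose value at $\tau$ is close to $(1-\la)S_\tau$ (resp.\ $S_\tau$) on the bad set --- is precisely what you leave unproved, deferring it to a hoped-for ``pathwise pasting/Girsanov-type perturbation'' or an unspecified Hahn--Banach duality. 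That is a genuine gap, not a technicality: the mere existence of consistent price systems for levels $\la_n\uparrow\la$ (your suggested parametrization) gives no control whatsoever on \emph{where inside} the widening band $[(1-\la_n)S_\tau,S_\tau]$ the value $\widetilde S^n_\tau$ sits, so nothing forces it toward the lower edge on a prescribed $\F_\tau$-set; and a measure-change at $\tau$ cannot move $\widetilde S_\tau$ pathwise either. In fact you are pointing the parameter in the wrong direction: the full strength of the hypothesis is used at $\la'\downarrow 0$, not $\la'\uparrow\la$.

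The paper's resolution is elementary and avoids any pathwise construction: fix $0<\alpha<\la/2$ coming from the failure of \eqref{170} (with the bad sets written with the slightly perturbed prices $\tfrac{1-\la}{1-\alpha}S_\tau$ and $(1-\alpha)^2S_\tau$, so that a union over $\alpha>0$ recovers the original bad event), choose $\la'<\alpha$ and a $\la'$-CPS $(\widetilde S,Q)$. Because $\la'$ is small, $\widetilde S$ is trapped in the narrow band $[(1-\la')S,S]$, and multiplying by the \emph{constants} $\tfrac{1-\la}{1-\alpha}$ and $(1-\alpha)$ slides this band toward the lower, resp.\ upper, edge of the $\la$-spread while keeping it inside $[(1-\la)S,S]$; the scaled processes are still local $Q$-martingales, hence $\la$-CPS, and Proposition~\ref{150} applies to them. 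Then on $A_+=\{\varphi^1_\tau\ge 0,\ \varphi^0_\tau+\varphi^1_\tau\tfrac{1-\la}{1-\alpha}S_\tau<-x\}$ one bounds $\E_Q[\Vliq_T\mid A_+]\le\E_Q[\varphi^0_\tau+\varphi^1_\tau\tfrac{1-\la}{1-\alpha}\widetilde S_\tau\mid A_+]\le\E_Q[\varphi^0_\tau+\varphi^1_\tau\tfrac{1-\la}{1-\alpha}S_\tau\mid A_+]<-x$, and symmetrically on $A_-$ with $(1-\alpha)\widetilde S$, contradicting \eqref{169}. So your plan is salvageable, but as written it stops exactly where the real work begins, and the constructions you propose in its place are either insufficient or unnecessary compared with this constant-rescaling trick.
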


\begin{proof}
Supposing that \eqref{170} fails, we may find $\tfrac{\la}{2} >\alpha >0,$ and a stopping time $0\le \tau\le T,$ such that either $A=A_+$ or $A=A_-$ satisfies $\p[A]>0,$ where
\begin{align}
A_+=\{\varphi^1_\tau \geq 0, \ &\varphi^0_\tau+\varphi^1_\tau \tfrac{1-\la}{1-\alpha} S_\tau <-x\}, \label{171} \\
A_-=\{\varphi^1_\tau \le 0, \ &\varphi^0_\tau+\varphi^1_\tau (1-\alpha)^2 S_\tau < -x\}. \label{172}
\end{align}
Indeed, focusing on \eqref{171} and denoting by $A_+(\alpha)$ the set in \eqref{171} we have $\cup_{\alpha>0} A_+(\alpha)=\{\varphi^1_\tau \geq 0, \varphi^0_\tau + \varphi^1_\tau (1-\lambda)S_\tau < -x\}$, showing that the failure of \eqref{170} implies the existence of $\alpha >0$ such that $\mathbb{P}[A] > 0.$
 
Choose $0 <\la' <\alpha$ and a $\la'$-consistent price system $(\widetilde{S},Q).$ As $\widetilde{S}$ takes values in $[(1-\la')S,S]$, we have that $(1-\alpha)\widetilde{S}$ as well as
$\tfrac{1-\la}{1-\alpha}\widetilde{S}$ take values in $[(1-\la)S,S]$ as $(1-\lambda')(1-\lambda)>(1-\lambda)$ and $(1-\lambda')\frac{1-\lambda}{1-\alpha} > 1-\lambda.$ It follows that $((1-\alpha)\widetilde{S},Q)$ as well as $(\tfrac{1-\la}{1-\alpha}\widetilde{S},Q)$ are consistent price systems
under transaction costs $\la.$ By Proposition 1.6 we obtain that
$$\Big(\varphi^0_t+\varphi^1_t(1-\alpha)\widetilde{S}_t\Big)_{0\le t\le T} ~ \mbox{and} \ \Big(\varphi^0_t+\varphi^1_t\tfrac{1-\la}{1-\alpha} \widetilde{S}_t\Big)_{0\le t\le T}$$
are optional strong $Q$-super-martingales. Arguing with the second process using $\widetilde{S} \leq S,$ we obtain from \eqref{171} the inequality
\begin{align*}
\E_Q[V^{liq}_T \mid A_+] &\leq \E_Q\left[\varphi^0_T+\varphi^1_T\frac{1-\la}{1-\alpha}\widetilde{S}_T \Big|A_+\right]\\
& \le \E_Q\left[\varphi^0_\tau+\varphi^1_\tau\frac{1-\la}{1-\alpha} \widetilde{S}_\tau \Big|A_+\right]\\
& \le \E_Q\left[\varphi^0_\tau+\varphi^1_\tau\frac{1-\la}{1-\alpha} S_\tau \big|A_+\right]<-x.
\end{align*}
Arguing with the first process and using that $\widetilde{S}\geq (1-\la') S\geq(1-\al)S$ (which implies that $\varphi^1_\tau (1-\al)\widetilde{S}_\tau \le \varphi^1_\tau(1-\al)^2S_\tau$ 
on $A_-$) we obtain from \eqref{172} the inequality
\begin{align*}
\E_Q[V^{liq}_T \mid A_-] &\leq  \E_Q\left[\varphi^0_T+\varphi^1_T(1-\al)\widetilde{S}_T |A_-\right]\\
& \le \E_Q\left[\varphi^0_\tau+\varphi^1_\tau(1-\al)\widetilde{S}_\tau |A_-\right]\\
& \le \E_Q\left[\varphi^0_\tau+\varphi^1_\tau(1-\al)^2 S_\tau |A_-\right]<-x.
\end{align*}
Either $A_+$ or $A_-$ has strictly positive probability; hence we arrive at a contradiction to $\Vliq_T \geq - x$ almost surely.
\end{proof}

\section{The numéraire-free setting}

In this section we derive results analoguous to Proposition \ref{150} and Theorem \ref{pro4.11} in a numéraire-free setting. This is inspired by the discussion of the numéraire-based versus numéraire-free setting in \cite{GRS08} and \cite{S14} (compare also \cite{DS95}, \cite{KS02}, \cite{Y98}, \cite{Y05}). 

We complement the above notions of admissibility and consistent price systems by the following numéraire-free variants.

\begin{definition}
In the setting of Definition \ref{def4.2} we call a self-financing strategy $\varphi$ {\it admissible in a numéraire-free sense} if there is $M>0$ such that
\begin{equation}\label{N1}
V_\tau^{liq}(\varphi^0,\varphi^1):=\varphi^0_\tau+(\varphi_\tau^1)^+(1-\la)S_\tau-(\varphi_\tau^1)^-S_\tau\geq -M (1+S_\tau), \quad \quad \mbox{a.s.},
\end{equation}
for each $[0,T]$-valued stopping time $\tau$.
\end{definition}

While the control of the portfolio process $\varphi$ in \eqref{149} is in terms of $M$ units of bond (which is considered as numéraire), the present condition \eqref{N1} stipulates that the risk involved by the trading strategy $\varphi$ can be super-hedged by holding $M$ units of bond plus $\frac{M}{1-\lambda}$ units of stock.

\begin{definition}\label{def.N2}
Fix $1 > \lambda \geq 0.$ In the setting of Definition \ref{def4.1} we call a pair $(\widetilde{S},Q)=((\widetilde{S}_t)_{0 \leq t \leq T,Q})$ satisfying \eqref{p3} a consistent price process {\it in the non-local sense} if $\widetilde{S}$ is a true martingale under $Q$, {\it not only a local martingale}.
\end{definition}

The passage from the numéraire-based to numéraire-free admissibility for the primal objects, i.e.~the trading strategies $\varphi$, perfectly corresponds to the passage from local martingales to martingales in Definition \ref{def.N2} for the dual objects, i.e.~the consistent price systems. This is the message of the two subsequent results (compare also \cite{S14}).

\begin{proposition}\label{prop2.3}
In the setting of Proposition \ref{150}  fix a self-financing trading strategy $\varphi=(\varphi^0, \varphi^1)$ which we now assume to be {\it admissible in the numéraire-free sense}. Also fix $(\widetilde{S}, Q)$ which we now assume to be a $\lambda$-consistent price system {\it in the non-local sense}, i.e.~$\widetilde{S}$ is a true $Q$-martingale.
We again may conclude that the process 
$$\widetilde{V}_t:= \varphi^0_t + \varphi^1_t \widetilde{S}_t, \qquad \qquad \qquad 0\leq t \leq T,$$
satisfies $\widetilde{V} \geq V^{liq}$ almost surely and is an optional strong super-martingale under $Q$.
\end{proposition}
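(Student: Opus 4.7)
My plan is to mimic the construction in the proof of Proposition \ref{150} to obtain a decomposition $\widetilde V = N - A$ with $N$ a local $Q$-martingale and $A$ an increasing predictable (l\`adl\`ag) process, and then use the strengthened primal and dual hypotheses to upgrade $N$ from a local martingale to a genuine $Q$-supermartingale. Inspection of the proof of Proposition \ref{150} reveals that admissibility was only invoked at the final Ansel--Stricker step that turned the local-martingale component into a supermartingale: the construction of the decomposition itself --- the It\^o calculus treatment of the continuous part, the jump contributions \eqref{185a} and \eqref{186a}, the summation for uniformly bounded total variation, and the localization to finite variation --- depends only on the self-financing relations \eqref{148}, \eqref{177}, \eqref{178} and on $\widetilde S$ being a local $Q$-martingale. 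All of these inputs remain in force, so the same case analysis will produce $\widetilde V = N - A$ verbatim. The pointwise inequality $\widetilde V \geq V^{liq}$ is immediate from $\widetilde S \in [(1-\lambda)S, S]$, as in Proposition \ref{150}.

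The upgrade step is where the two strengthened hypotheses come into play. Since $\widetilde S \geq (1-\lambda)S$ gives $S \leq \widetilde S/(1-\lambda)$, the num\'eraire-free admissibility \eqref{N1} yields
\begin{equation*}
\widetilde V_t \ \geq\ V_t^{liq} \ \geq\ -M(1 + S_t) \ \geq\ -M - \tfrac{M}{1-\lambda}\widetilde S_t, \qquad 0 \leq t \leq T.
\end{equation*}
Because $A \geq 0$, this forces $N_t = \widetilde V_t + A_t \geq -M - \tfrac{M}{1-\lambda}\widetilde S_t$. Setting $L_t := \tfrac{M}{1-\lambda}\widetilde S_t$, which is a true $Q$-martingale precisely because $\widetilde S$ is one by assumption, the process $N+L$ is then a local $Q$-martingale bounded from below by the constant $-M$, and is therefore a $Q$-supermartingale by the standard Fatou argument. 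Subtracting the true martingale $L$ preserves the supermartingale property, so $N$ is itself a $Q$-supermartingale, and $\widetilde V = N - A$ is the desired optional strong super-martingale under $Q$.

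The main obstacle --- and the reason Definition \ref{def.N2} has to tighten ``local martingale'' to ``true martingale'' --- is precisely this last comparison: if $\widetilde S$ were merely a local $Q$-martingale, then so would be $L$, and the bound $N + L \geq -M$ would no longer be enough to conclude that $N$ is a supermartingale. Thus the primal relaxation of admitting a stock-proportional lower bound $-M(1+S)$ on the liquidation value can only be balanced on the dual side by demanding that the shadow price $\widetilde S$ itself be a true martingale, which is exactly the primal-dual correspondence the proposition is meant to expose.
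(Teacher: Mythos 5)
Your proof is correct, and its first half coincides with the paper's: the paper likewise observes that the construction of Proposition \ref{150} carries over verbatim (the self-financing relations \eqref{148}, \eqref{177}, \eqref{178}, predictability and finite variation of $\varphi^1$, and the local martingality of $\widetilde{S}$ are all that is used), yielding that $\widetilde{V}$ is a \emph{local} optional strong super-martingale, i.e.\ exactly your decomposition $\widetilde{V}=N-A$ with $N$ a local $Q$-martingale and $A$ increasing predictable. Where you genuinely diverge is in removing the word ``local''. The paper stays at the level of $\widetilde{V}$: it takes a localizing sequence $(\tau_n)$, writes the stopped super-martingale inequalities \eqref{N5b}, and passes to the limit by a conditional Fatou lemma, the required uniform integrability of the negative parts coming from $\widetilde{V}_\tau\geq -M(1+S_\tau)$ together with $S_\tau\leq\widetilde{S}_\tau/(1-\lambda)$ and the fact that the true martingale $\widetilde{S}$ is uniformly integrable on $[0,T]$. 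You instead work at the level of the decomposition: from the same two inequalities you get $N+\tfrac{M}{1-\lambda}\widetilde{S}\geq -M$, so this local $Q$-martingale is a super-martingale by Fatou/Ansel--Stricker; subtracting the \emph{true} martingale $\tfrac{M}{1-\lambda}\widetilde{S}$ (this is precisely where the non-local hypothesis of Definition \ref{def.N2} enters, as you note) shows $N$ is a super-martingale, and the ``if'' direction of the Mertens characterization \eqref{184} stated before Proposition \ref{150} then gives the optional strong super-martingale property of $\widetilde{V}=N-A$. Your route buys a cleaner limiting argument -- no conditional Fatou along a localizing sequence -- at the price of leaning on the global validity of the decomposition $\widetilde{V}=N-A$ (which is exactly the paper's notion of local optional strong super-martingale, so this is no extra cost) and on the Mertens characterization; the paper's route avoids manipulating the decomposition and argues directly with stopping times. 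Both arguments use the true-martingale assumption on $\widetilde{S}$ in an equivalent way, so the proposal is a valid alternative finish.
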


\begin{proof}
We closely follow the proof of Proposition \ref{150} which carries over verbatim, also under the present weaker assumption of numéraire-free admissibility. Again, we conclude that the second integral in \eqref{183} is a local $Q$-martingale from the fact that $\widetilde{S}$ is a local $Q$-marginale and $\varphi^1$ is predictable and of finite variation. The only subtlety is the following:~contrary to the setting of Proposition \ref{150} we now may only deduce the obvious implication that $\widetilde{V}=(\widetilde{V}_t)_{0 \leq t \leq T}$ is a {\it local} optional strong super-martingale under $Q$. 

What needs extra work is an additional argument which finally shows that the word {\it local} may be dropped, i.e.~that $\widetilde{V}$ again is an optional strong super-martingale under $Q$.

By the numéraire-free admissibility condition we know that there is some $M>0$ such that, for all $[0,T]$-valued stopping times $\tau$,
\begin{equation}\label{N5}
\widetilde{V}_\tau \geq V_\tau^{liq} \geq -M(1+S_\tau), \qquad \mbox{a.s.}
\end{equation}
We also know that $\widetilde{S}$ is a uniformly integrable martingale under $Q$. Hence the family of random variables $\widetilde{S}_\tau$ as well as that of $S_\tau$ (note that $S_\tau \leq \frac{\widetilde{S}_\tau}{1-\lambda}$), where $\tau$ ranges through the $[0,T]$-valued stopping times, is uniformly integrable.

We have to show that, for all stopping times $0 \leq \rho \leq \sigma \leq T$ we have
\begin{equation}\label{N5a}
\mathbb{E}_Q[\widetilde{V}_\sigma|\mathcal{F}_\rho] \leq \widetilde{V}_\rho.
\end{equation}
We know that $\widetilde{V}$ is a local optional strong super-martingale under $Q$, so that there is a localizing sequence $(\tau_n)^\infty_{n=1}$ of stopping times such that
\begin{equation}\label{N5b}
\mathbb{E}_Q[\widetilde{V}_{\sigma \wedge \tau_n} | \mathcal{F}_{\rho \wedge \tau_n}] \leq \widetilde{V}_{\rho \wedge \tau_n}, \quad n \geq 1.
\end{equation}
Using \eqref{N5} we may deduce \eqref{N5a} from \eqref{N5b} by the (conditional version of the) following well-known variant of Fatou's lemma: Let $(f_n)^\infty_{n=1}$ be a sequence of random variables on $(\Omega, \mathcal{F}, \mathbb{R})$ converging almost surely to $f_0$ and such that the negative parts $(f_n^-)^\infty_{n=1}$ are uniformly $Q$-integrable. Then
$$\mathbb{E}_Q[f_0] \leq \liminf_{n \to \infty} \mathbb{E}_Q[f_n].$$ 
\end{proof}

\begin{remark}
We have assumed in Proposition \ref{150} as well as in the above Proposition \ref{prop2.3} that $Q$ is equivalent to $\mathbb{P}.$ In fact, we may also assume that $Z^0_T$ vanishes on a non-trivial set so that $Q$ is only absolutely continuous w.r.~to $\mathbb{P}$. The assertions of the two propositions still remain valid for $\mathbb{P}$-absolutely continuous $Q$, provided that we replace the requirements {\it almost surely} by {\it $Q$-almost surely}.
\end{remark}
We now state and prove the numéraire-free version of Theorem \ref{pro4.11}.

\begin{theorem}
In the setting of Theorem \ref{pro4.11} suppose now that $S$ satisfies $(CPS^{\lambda'})$ in the non-local sense, for each $1 > \lambda' > 0.$
As in Theorem \ref{pro4.11}, let $\varphi$ be admissible, but now in the numéraire-free sense, and let $x > 0$ such that
\begin{equation}\label{t1}
V_T^{liq}(\varphi^0, \varphi^1) \geq -x.
\end{equation}
We then also have
\begin{equation}\label{t2}
V_\tau^{liq}(\varphi^0, \varphi^1) \geq -x,
\end{equation}
a.s., for every stopping time $0 \leq \tau \leq T.$
\end{theorem}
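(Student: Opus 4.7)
The plan is to mimic the contradiction argument used for Theorem \ref{pro4.11}, but to invoke Proposition \ref{prop2.3} in place of Proposition \ref{150} so that the optional strong super-martingale property of the modified liquidation processes is preserved under the weaker num\'eraire-free admissibility. The only substantive check is that the hypothesis of Proposition \ref{prop2.3} is met, namely that after scaling $\widetilde{S}$ by the relevant constants we still obtain \emph{true} $Q$-martingales that lie in the bid-ask spread $[(1-\lambda)S,S]$.

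More precisely, suppose \eqref{t2} fails. Exactly as in the proof of Theorem \ref{pro4.11}, we find $0<\alpha<\tfrac{\lambda}{2}$ and a stopping time $\tau$ such that at least one of the sets
\begin{align*}
A_+ &= \{\varphi^1_\tau \geq 0,\ \varphi^0_\tau + \varphi^1_\tau \tfrac{1-\lambda}{1-\alpha} S_\tau < -x\},\\
A_- &= \{\varphi^1_\tau \leq 0,\ \varphi^0_\tau + \varphi^1_\tau (1-\alpha)^2 S_\tau < -x\},
\end{align*}
has strictly positive $\mathbb{P}$-measure. Pick $0<\lambda'<\alpha$ and, using the hypothesis that $(CPS^{\lambda'})$ holds \emph{in the non-local sense}, select a consistent price system $(\widetilde{S},Q)$ such that $\widetilde{S}$ is a true $Q$-martingale. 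The scalar multiples $(1-\alpha)\widetilde{S}$ and $\tfrac{1-\lambda}{1-\alpha}\widetilde{S}$ are still true $Q$-martingales, and the same arithmetic as in the proof of Theorem \ref{pro4.11} (using $(1-\lambda')(1-\lambda)>1-\lambda$ and $(1-\lambda')\tfrac{1-\lambda}{1-\alpha}>1-\lambda$) shows that both take values in $[(1-\lambda)S,S]$. Hence both are $\lambda$-consistent price systems in the non-local sense.

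Since $\varphi$ is admissible in the num\'eraire-free sense, Proposition \ref{prop2.3} applies to each of these two consistent price systems, so the processes
$$\bigl(\varphi^0_t + \varphi^1_t(1-\alpha)\widetilde{S}_t\bigr)_{0\leq t\leq T} \quad \mbox{and} \quad \bigl(\varphi^0_t + \varphi^1_t \tfrac{1-\lambda}{1-\alpha}\widetilde{S}_t\bigr)_{0\leq t\leq T}$$
are optional strong $Q$-super-martingales. Conditioning on $A_+$ (respectively $A_-$) and reproducing the chain of inequalities displayed in the proof of Theorem \ref{pro4.11}, we deduce $\mathbb{E}_Q[V^{liq}_T \mid A_\pm]<-x$, contradicting \eqref{t1}. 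Since $Q\sim\mathbb{P}$, the positivity of $\mathbb{P}[A_\pm]$ transfers to $Q[A_\pm]>0$, so the contradiction is genuine.

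The only step that requires genuine attention is verifying that the assumption of Proposition \ref{prop2.3} survives the scaling, i.e.\ that the true-martingale property and the bid-ask bracketing are preserved; both are immediate. Everything else is a direct transcription of the proof of Theorem \ref{pro4.11}, and no new analytic machinery beyond Proposition \ref{prop2.3} (in particular, the Fatou-type argument already embedded in it) is needed.
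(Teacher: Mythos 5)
Your proof is correct and is essentially the paper's own argument: the paper proves this theorem simply by noting that the proof of Theorem \ref{pro4.11} carries over verbatim, with Proposition \ref{prop2.3} replacing Proposition \ref{150}, which is exactly the substitution you make (and you correctly identify that the only point needing a check is that the scaled processes $(1-\alpha)\widetilde{S}$ and $\tfrac{1-\lambda}{1-\alpha}\widetilde{S}$ remain true $Q$-martingales lying in the bid-ask spread). Your parenthetical inequality ``$(1-\lambda')(1-\lambda)>1-\lambda$'' repeats a typo from the paper --- the relevant bound is $(1-\lambda')(1-\alpha)>1-\lambda$, valid since $\lambda',\alpha<\tfrac{\lambda}{2}$ --- but this does not affect the argument.
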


\begin{proof}
The proof of Theorem \ref{pro4.11} carries over verbatim to the present setting, replacing the application of Proposition \ref{150} by an application of its numéraire-free version Proposition \ref{prop2.3}.
\end{proof}

\section{A Counter-Example}

The assumption $(CPS^{\la'})$, for {\it each} $\la'>0,$ cannot be dropped in Proposition \ref{pro4.11} as shown by the example presented in the next lemma.
\begin{lemma}\label{l4.12}
Fix $1 >\la \geq \la' >0$ and $C>1$. There is a continuous process $S=(S_t)_{0\le t \le 1}$ satisfying $(CPS^{\la'}),$ and a $\la$-self-financing, admissible trading strategy 
$(\varphi^0,\varphi^1)=(\varphi^0_t,\varphi^1_t)_{0\le t\le 1}$ such that
\begin{equation}\label{173}
V^{liq}_1(\varphi^0,\varphi^1)  \geq -1, \qquad\qquad\qquad \mbox{a.s.}
\end{equation}
while
\begin{equation}\label{174}
\hspace*{-2.7cm}\p\left[ V^{liq}_{\tfrac12} (\varphi^0,\varphi^1) \le -C\right] >0.
\end{equation}
\end{lemma}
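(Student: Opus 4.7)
The plan is to exhibit an explicit $(S,\varphi)$ via a multi-level binary branching construction. Fix a large integer $K$, to be chosen at the end as a function of $C,\lambda,\lambda'$, and a filtration carrying binary switches $\xi_1,\ldots,\xi_K$ revealed at the times $t_j = j/(2K)$. On the ``main bad'' atom $B := \{\xi_1 = \cdots = \xi_K = 1\}$, which is assigned strictly positive $\mathbb P$-probability, define $S$ to decay geometrically along the grid by $S_{t_j} = (1-\lambda')^j$, linearly interpolated, so that $S_{1/2} = (1-\lambda')^K$ is arbitrarily small. On each ``escape atom'' $E_j := \{\xi_1 = \cdots = \xi_{j-1} = 1,\ \xi_j = 0\}$, the process leaves the decaying trajectory at $t_j$ and continuously rises on $[t_j,1]$ to a recovery value $r_j$; on $B$ itself, a symmetric structure on $[1/2,1]$ brings $S_1$ up to a value $R$. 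Each ``recovery arm'' is further subdivided by additional binary switches (of depth $K''$); this extra subdivision is needed to enlarge the band $[(1-\lambda')S,S]$ along the recovery so that a martingale $\widetilde S$ with the desired endpoint can fit.

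The trading strategy is a single large purchase at $t_1$: set $\varphi^1_t = N \cdot \mathbf{1}_{(t_1,1]}(t)$ with $\varphi^0$ adjusted so the purchase is self-financed, and pick $N$ so that $N(1-\lambda') \geq C+1$. The main computations are then
\begin{equation*}
V^{liq}_{1/2} \;=\; -N(1-\lambda') + N(1-\lambda)(1-\lambda')^K \quad \text{on } B,
\end{equation*}
which is $\leq -C$ for $K$ sufficiently large, and $V^{liq}_1 = -N(1-\lambda') + N(1-\lambda)\, S_1$, controlled atom-by-atom by choosing $R$ and each $r_j$ slightly above the threshold $(N(1-\lambda')-1)/(N(1-\lambda)) \approx (1-\lambda')/(1-\lambda)$.

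The shadow price $(\widetilde S, Q)$ is built recursively on the tree: starting from the leaves, one places $\widetilde S$ at a suitable point in the band $[(1-\lambda')S,S]$, and at each branch point the single-step martingale identity $\widetilde S_{t_j^-} = q_j \widetilde S_{t_j}^{\text{cont}} + (1-q_j) \widetilde S_{t_j}^{\text{esc}}$ determines simultaneously the conditional $Q$-probability $q_j$ and the escape value of $\widetilde S$. The main obstacle is the simultaneous verification of (a) $\widetilde S \in [(1-\lambda')S,S]$ at every node, (b) $q_j \in (0,1)$ so that $Q \sim \mathbb P$, and (c) $V^{liq}_1 \geq -1$ on every escape atom. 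Item (c) forces $r_j$ (and $R$) close to $(1-\lambda')/(1-\lambda)$, which typically exceeds the naive upper bound $(1-\lambda')^{j-1}$ available without further subdivision; this is exactly why the additional layers of sub-branching on the recovery arms are needed, and $K''$ is chosen so that $(1-\lambda')^{-K''} \geq 1/(1-\lambda)$. A stepwise induction along the tree (working from the leaves toward the root) then reduces (a)--(c) to a finite system of elementary algebraic inequalities, solvable using only the slack $\lambda - \lambda' \geq 0$ and the freedom provided by $K$ and $K''$.
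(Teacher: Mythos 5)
There is a genuine gap, and it sits exactly in the step you defer to (``a finite system of elementary algebraic inequalities, solvable\ldots''): for your choice of $S$ and $\varphi$ no $\lambda'$-consistent price system can exist, so $(CPS^{\lambda'})$ fails. The obstruction is not the width of the band along the recovery arms but the conditional supermartingale property of $\widetilde S$ at the decayed nodes. Since your strategy is the \emph{deterministic} buy-and-hold $\varphi^1=N\mathbf{1}_{(t_1,1]}$, purchased at price $S_{t_1}=1-\lambda'$ on $\{\xi_1=1\}$, the a.s.\ requirement $V^{liq}_1\geq -1$ forces on \emph{every} path of $\{\xi_1=1\}$ (all of $B$ and all later escape atoms, however finely you sub-branch them) $S_1\geq c_0:=\frac{N(1-\lambda')-1}{N(1-\lambda)}$, hence $\widetilde S_1\geq(1-\lambda')c_0$ there. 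Any CPS $\widetilde S$ is a nonnegative local $Q$-martingale, hence a $Q$-supermartingale, so on $B$ one gets $\widetilde S_{1/2}\geq \mathbb{E}_Q[\widetilde S_1\mid\mathcal F_{1/2}]\geq(1-\lambda')c_0$, a constant independent of $K$, while the band forces $\widetilde S_{1/2}\leq S_{1/2}=(1-\lambda')^K$, which you make arbitrarily small: contradiction. Retuning the parameters cannot rescue the scheme: the same argument applied at the purchase node (where $\widetilde S_{t_1}\leq 1-\lambda'$ and all continuations satisfy $S_1\geq c_0$) forces $c_0\leq 1$, i.e.\ $N(\lambda-\lambda')\leq 1$, whereas $V^{liq}_{1/2}\leq -C$ on $B$ requires $N(1-\lambda')\geq C$; together these demand $C\leq\frac{1-\lambda'}{\lambda-\lambda'}$, so for fixed $\lambda>\lambda'$ and large $C$ the construction is infeasible no matter how $N,K,K'',R,r_j$ are chosen. (Your scheme does work when $\lambda'=\lambda$, which is why the tension only shows up through the slack $\lambda-\lambda'$ --- but in the wrong direction: the slack kills, rather than saves, the construction.) In particular, the extra sub-branching of the recovery arms is aimed at the wrong problem: it adds more paths, but all of them must still end with $\widetilde S_1\geq(1-\lambda')c_0$ because item (c) holds on every atom, so the conditional lower bound on $\widetilde S_{1/2}$ is unchanged.

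The missing idea is that the size of the position must be adaptive, financed by a prior windfall --- this is how the paper's proof works. There, $S$ first performs a genuine martingale move on $[0,\tfrac14]$ under which the agent, holding a single share, becomes rich ($M$ bonds) with positive probability; only on that event does she buy the huge position $\tfrac{M+1}{\lambda}$, and the terminal constraint $V^{liq}_1\geq-1$ is then met with the price merely \emph{returning to its pre-dip level}, because the windfall $M$ pays for the round-trip transaction cost $M+1$ up to the allowed $-1$. Consequently no path is forced to terminate above the band, and the stopped, shrunk process $\widetilde S_t=(1-\lambda')S_{t\wedge 1/4}$ is a bona fide martingale CPS. A single deterministic purchase held on all paths can never reproduce this for large $C$ when $\lambda'<\lambda$, so the construction needs to be redesigned along these adaptive lines (big bet only on the enrichment event, price dip of relative size $\lambda'$ only, recovery only to the old level) rather than patched by further subdivision of the recovery arms.
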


\begin{proof}
In order to focus on the central (and easy) idea of the construction we first show the assertion for the constant $C=2-\la$ and under the assumption $\la = \la'$. In this case we can give a deterministic example, 
$\mbox{i.e.}~S,\varphi^0$ and $\varphi^1$ will not depend on the random element $\omega \in\Omega.$

Define $S_0=S_1=1,$ and $S_{\tfrac12} =1-\la$ where we fix $T=1$.

To make $S=(S_t)_{0\le t\le T}$ continuous, we interpolate linearly, i.e.
\begin{equation}\label{175}
S_t=1-2t\la, \qquad \qquad 0\le t\le \tfrac12, 
\end{equation}
\begin{equation}\label{176}
S_t=1-2(1-t)\la, \qquad \tfrac12 \le t\le 1.
\end{equation}

Note that condition $(CPS^\la)$ is satisfied, as the constant process $\widetilde{S}_t\equiv(1-\la)$ defines a 
$\la$-consistent price system: it trivially is a martingale (under any probability measure) and takes values in $[(1-\la)S,S].$

Starting from the initial endowment $(\varphi^0_{0},\varphi^1_{0})=(0,0),$ we might invest, at time $t=0,$ the maximal amount into the stock so that at time $t=1$ condition
\eqref{173} holds true. In other words, we let $\varphi^1_{0_+} =-\varphi^0_{0_+}$ be the biggest number such that
\begin{align*}
(1-\la)\varphi^1_{0_+} +\varphi^0_{0_+} \geq -1,
\end{align*}
which clearly gives $\varphi^1_{0_+}=\tfrac{1}{\la}.$ Hence $(\varphi^0_t,\varphi^1_t)=(-\tfrac{1}{\la},\tfrac{1}{\la}),$ for all $0 < t\le T,$ is a self-financing strategy, starting at
$(\varphi^0_0,\varphi^1_0)=(0,0)$ for which \eqref{173} is satisfied.
Looking at \eqref{174} we calculate
\begin{align*}
V_{\tfrac12}(\varphi^0,\varphi^1)=(1-\la)\cdot(1-\la)\cdot\tfrac{1}{\la}-\tfrac{1}{\la}=-2+\la.
\end{align*}

\bigskip

In order to replace $\la'= \la$ by an arbitrarily small constant $\la' > 0$, and $C=2-\la$ by an arbitrarily large constant $C>1,$ we make the following observation: if the initial endowment $(\varphi^0_{0},\varphi^1_{0})=(0,0)$
were replaced by $(\varphi^0_{0},\varphi^1_{0})=(M,0),$ for some large $M$, the agent could play the above game on a larger scale: she could choose $(\varphi^0_t,\varphi^1_t)=
(M-\tfrac{M+1}{\la},\tfrac{M+1}{\la}),$ for $0< t\le 1$, to still satisfy \eqref{173}:
\begin{align*}
V_1(\varphi^0,\varphi^1)=M-\tfrac{M+1}{\la} +(1-\la)\tfrac{M+1}{\la} =-1.
\end{align*}
As regards the liquidation value $\Vliq_{\tfrac12}$, we now assume $S_{\tfrac12} = 1-\la'$ (instead of $S_{\tfrac12} = 1-\la$ in \eqref{175} and \eqref{176}) to make sure that $(CPS^{\la'})$ holds true. The liquidation value at time $t=\frac{1}{2}$ then becomes

\begin{align*}
\Vliq_{\tfrac12}(\varphi^0,\varphi^1)&=M-\tfrac{M+1}{\la}+(1-\la)(1-\la')\tfrac{M+1}{\la}\\
&=M-(M+1) [1+\la'(\frac{1}{\la}-1)]
\end{align*}
which tends to $-\infty$, as $M\to \infty$ in view of $0<\la'\leq \la<1$.

Turning back to the original endowment $(\varphi^0_0,\varphi^1_0)=(0,0),$ the idea is that, during the time interval $[0,\tfrac14],$ the price process $S$ provides the agent 
with the opportunity to become rich with positive probability, i.e. \linebreak $\p[(\varphi^0_{\tfrac14},\varphi^1_{\tfrac14})=(M,0)] >0.$
We then play the above game, conditionally on the event $\{(\varphi^0_{\tfrac14},\varphi^1_{\tfrac14})=(M,0)\}$ and with $[0,1]$ replaced by $[\tfrac14,1].$

The subsequent construction makes this idea concrete. Let $(\F_t)_{0\le t\le 1}$ be generated by a Brownian motion $(W_t)_{0\le t\le 1}.$
Fix disjoint sets $A_+$ and $A_-$ in $\F_{\tfrac18}$ such that $\p[A_+]=\tfrac{1}{2\widetilde{M}-1}$ and $\p[A_-]=1-\p[A_+],$ where $\widetilde{M}>1$ is defined by $M=-1+\widetilde{M}(1-\la').$
The set $A_+$ is split into two sets $A_{++}$ and $A_{+-}$ such that $A_{++}$ and $A_{+-}$ are in $\F_{\tfrac14}$ and
\begin{align*}
\p\left[A_{++} \bigg|\F_{\tfrac18}\right]=\p\left[A_{+-} \bigg|\F_{\tfrac18}\right]=\frac12\mathbbm{1}_{A_+}.
\end{align*}

We define $S_{\tfrac14}$ by
\begin{align*}
S_{\tfrac14} = \begin{cases}
2\widetilde{M}-1 &\qquad \mbox{on} ~ A_{++}\\
1 &\qquad \mbox{on} ~ A_{+-} \\
\frac12 &\qquad \mbox{on} ~ A_{-}
\end{cases}
\end{align*}
and
\begin{equation}
S_t=\E\left[S_{\tfrac14} |\F_t\right], \qquad\qquad 0\le t\le \frac14,
\end{equation}
so that $(S_t)_{0\le t\le \frac14}$ is a continuous $\p$-martingale. The numbers above were designed in such a way that

\begin{align*}
\hspace*{-1.4cm}S_0&=1,
\end{align*}
and
\begin{align*}
S_{\tfrac18} &= \begin{cases}
\widetilde{M} &\quad\mbox{on} ~ A_{+}\\
\frac12 &\quad\mbox{on} ~ A_{-}
\end{cases}
\end{align*}

To define $S_t$ also for $\tfrac14 <t\le 1$ we simply let $S_t=S_{\tfrac14}$ on $A_{++} \cup A_-$ while, conditionally on $A_{+-}$, we repeat the above deterministic 
construction on $[\tfrac14 ,1]:$
\begin{align*}
S_t&=1-4(t-\tfrac14)\la', \qquad &\tfrac14\le t\le \tfrac12,\\
S_t&=1-2(1-t)\la', \qquad &\tfrac12\le t\le 1.
\end{align*}

This defines the process $S.$ Condition $(CPS^{\la'})$ is satisfied as $(\widetilde{S}_t)_{0\le t\le 1}: =((1-\la')S_{t\wedge\tfrac14})_{0\le t\le 1}$ is a $\p$-martingale taking values in the
bid-ask spread $[(1-\la')S_t,S_t]_{0\le t\le 1}.$

Let us now define the strategy $(\varphi^0,\varphi^1):$ starting with $(\varphi^0_0,\varphi^1_0)=(0,0)$ we define $(\varphi^0_t,\varphi^1_t)=(-1,1),$ for 
$0< t\leq\tfrac18.$ In prose: the agent buys one stock at time $t=0$ and holds it until time $t=\tfrac18.$ At time $t=\tfrac18$ she sells the stock again, so that
$(\varphi^0_{\tfrac18},\varphi^1_{\tfrac18})=(-1+\tfrac{(1-\la)}{2},0)$ on $A_-,$ while $(\varphi^0_{\tfrac18_+},\varphi^1_{\tfrac18_+})=(-1+\widetilde{M}(1-\la'),0)=(M,0)$ on $A_+.$

On $A_-$ we simply define $(\varphi^0_t,\varphi^1_t)=(-1+\tfrac{1-\la}{2},0),$ for all $\tfrac18 < t\le 1$ and note that \eqref{173} is satisfied on $A_-$.

On $A_+$ we define $(\varphi^0_t,\varphi^1_t)=(M,0),$ for $\tfrac18< t\leq\tfrac14.$ In prose: during $]\tfrac18,\tfrac14]$ the agent does not invest into the stock and is happy about the $M$ bonds in her portfolio. At time $t=\tfrac14$ we distinguish two cases: on $A_{++}$ we continue to define 
$(\varphi^0_t,\varphi^1_t)=(M,0),$ also for $\tfrac14< t\le 1.$ On $A_{+-}$ we let $(\varphi^0_t,\varphi^1_t)=(M-\tfrac{M+1}{\la},\tfrac{M+1}{\la}),$ for 
$\tfrac14< t\le 1.$ As discussed above, inequality \eqref{173} then holds true almost surely, while $V_{\tfrac12}(\varphi^0,\varphi^1)$ attains the value $M-(M+1)[1+\la'(\frac{1}{\la}-1)]$ which tends to $-\infty$ as $M$ tends to $\infty$. This happens with positive probability $\p[A_{+-}]>0.$

The construction of the example now is complete.
\end{proof}

\begin{acka}
I warmly thank Irene Klein without whose encouragement this note would not have been written and who strongly contributed to its shaping. Thanks go also to Christoph Czichowsky for his advice on some of the subtle technicalities of this note. I thank an anonymous referee for careful reading and for pointing out a number of inaccuracies.
\end{acka}

\bibliography{appendix.bib}

\end{document}